\newcommand{\ra}[1]{\renewcommand{\arraystretch}{#1}}
\title{An Approximation Theoretic Perspective of Sobol' Indices with Dependent Variables \thanks{This work was partially supported by the National Science Foundation under grant DMS-1522765, and grant DMS-1638521 to the Statistical and Applied Mathematical Sciences Institute. Any opinions, findings, and conclusions or recommendations expressed in this material are those of the authors and do not necessarily reflect the views of the National Science Foundation.}}
\author{J.~L. Hart\thanks{Department of Mathematics, North Carolina State University, Raleigh, NC 27695-8205 \email{jlhart3@ncsu.edu}} \and P.~A. Gremaud\thanks{Department of Mathematics, North Carolina State University, Raleigh, NC 27695-8205 \email{gremaud@ncsu.edu}}}
\newcommand{\R}{\mathbb{R}}
\newcommand{\x}{\mathbf x}
\newcommand{\y}{\mathbf y}
\newcommand{\V}{\operatorname{Var}}
\newcommand{\C}{\operatorname{Cov}}
\newcommand{\F}{\mathscr F}
\newcommand{\N}{\mathbb N}
\newcommand{\E}{\mathbb E}
\newcommand{\p}{\mathcal P_{\sim u}}
\newtheorem{example}{Example}[section]
\begin{document}

\maketitle
\newcommand{\slugmaster}{
\slugger{juq}{xxxx}{xx}{x}{x--x}}

\renewcommand{\thefootnote}{\fnsymbol{footnote}}
\newcommand{\argmin}{\operatornamewithlimits{argmin}}

\begin{abstract}
When performing global sensitivity analysis (GSA), it is often assumed, for the sake of simplicity, for lack of information, or for sheer expediency, that uncertain variables in the model are independent.  It is intuitively clear--and easily confirmed through simple examples--that applying a GSA method designed for independent variables to a set of correlated variables generally leads to results that hard to interpret, at best. We generalize the probabilistic framework for GSA pioneered  by Sobol' to  problems with correlated variables; this is done by reformulating his indices in terms of approximation errors rather than variance analysis. The implementation of the approach and its computational complexity are discussed and illustrated on synthetic examples. 
\end{abstract}

\begin{keywords} global sensitivity analysis, Sobol' indices, dependent variables \end{keywords}

\begin{AMS} 65C60, 62H20 \end{AMS}

\pagestyle{myheadings}
\thispagestyle{plain}
\markboth{J.L. HART  AND P.A.~GREMAUD}{SOBOL' INDICES}

\section{Introduction}
\label{intro}
Let $f:\Omega \to \R$, $\Omega \subset \R^p$, be a function or model and  let $\x=(x_1,x_2,\dots,x_p) \in \Omega$ be the input variables of that model. Global sensitivity analysis (GSA) aims to quantify the relative importance of the variables $x_1, \dots, x_p$ in determining  $f$ \cite{saltellibook}. Such quantification  is a crucial step in the development of predictive models. 
Sobol' \cite{sobol93,sobol}  introduced the idea of measuring the importance of a variable or group of variables by (i) regarding them as random variables and (ii)  apportioning to each group its relative contribution to the total variance of $f(\x)$. The indices now carrying his name  formalize this probabilistic approach and are a recognized tool in  GSA. 

In that spirit, 
let $u=\{i_1,i_2,\dots,i_k\}$ be a  subset of $\{1,2,\dots ,p\}$ and let $\sim u =\{1,2,\dots,p\}\setminus u$ be its complement. We refer to the group of variables corresponding to $u$ as  $\x_u=(x_{i_1},x_{i_2},\dots,x_{i_k})$. 
Assuming $f$ to be  square integrable against a probability density function (PDF) $\phi$, we consider the decomposition
\begin{eqnarray}
f(\x)=f_0+\sum_{k=1}^p \sum_{|u|=k} f_u(\x_u). \label{anova}
\end{eqnarray}
where the $f_u$'s are defined recursively 
\begin{align}
\label{f_comp}
&f_0=\E[f(\x)],\\
&f_i(x_i)=\E[f(\x)|x_i]-f_0, \nonumber \\
&f_{i,j}(x_i,x_j)=\E[f(\x)|x_i,x_j]-f_i(x_i)-f_j(x_j)-f_0, \nonumber \\
\vdots \nonumber \\
&f_u(\x_u) = \E[f(\x)|\x_u]-\sum_{\substack{v \in P\\ v \subset u}} f_v(\x_v), \nonumber
\end{align}
where $P$ is the power set of $\{1,2,\dots,p\}$. In what follows, we omit writing $v \in P$ for such sums.
The decomposition \eqref{anova} is referred to as the ANOVA (analysis of variance) decomposition of $f(\x)$ if the input variables are {\em independent}; further, in that case, the $f_u$'s  satisfy
\begin{eqnarray}
\int f_u(\x_u)\phi(\x) dx_\ell=0 \quad \mbox{ for }\ell \in u. \label{anovaproperty}
\end{eqnarray}
If follows from \eqref{anovaproperty} that the $f_u(\x_u)$'s have mean zero and are mutually orthogonal. Consequently, we may decompose the variance of $f(\x)$ as
\begin{eqnarray}
\V(f(\x))=\sum_{k=1}^p \sum_{|u|=k} \V(f_u(\x_u)). \label{vardecomp}
\end{eqnarray}
This suggests the now classical definition of the Sobol' index (for independent variables) corresponding to  $\x_u$ 
\begin{eqnarray}
S_u^I=\frac{\V(f_u(\x_u))}{\V(f(\x))}.
\label{indsobol}
\end{eqnarray}
With independent variables, the Sobol' indices satisfy two key properties
\begin{align}
& \mbox{\bf conservation:} & \sum_{k=1}^p \sum_{|u|=k} S_u^I=1, \label{cp} \\
& \mbox{\bf boundedness:}  & \forall \,u, 0 \le S_u^I \le 1, \label{np}
\end{align}
where both properties easily follow from \eqref{vardecomp} and \eqref{indsobol}. The Sobol' indices with independent variables can be efficiently and accurately computed through the use of  ANOVA and Monte Carlo integration \cite{saltellialgorithm}.

When the input variables are {\em dependent}, the decomposition \eqref{anova} can still be considered but neither \eqref{anovaproperty} nor the resulting orthogonality properties hold. The Sobol' indices may then again be defined from \eqref{anova} \cite{li}
\begin{eqnarray}
S_u=\frac{\C(f_u(\x_u),f(\x))}{\V(f(\x))}, \label{covsobol}
\end{eqnarray}
where  \eqref{covsobol} clearly reverts to \eqref{indsobol} if the input variables are independent.
With dependent variables, the indices \eqref{covsobol} satisfy the conservation property but not the boundedness  property: while they sum up to 1, some of the  $S_u$'s may be negative and thus do not yield a quantitative measure of variable importance. We borrow  a simple example from \cite{staum} to illustrate this point. 

\begin{example}
\label{synexample}
Let 
\begin{eqnarray*}
f(x_1,x_2)=x_1+x_2
\end{eqnarray*}
where $x_1,x_2$ have a joint normal distribution with $\E[x_1]=\E[x_2]=0$, $\V(x_1)=\V(x_2)=1$, and $\C(x_1,x_2)=\rho \in (0,1)$. Then, following \eqref{anova},  $f$ admits the decomposition
\begin{eqnarray*}
f(x_1,x_2)=0+(1+\rho)x_1+(1+\rho)x_2+(-\rho)(x_1+x_2)
\end{eqnarray*}
and the associated Sobol' indices are $S_1=S_2=\frac{1+\rho}{2}$, $S_{1,2}=-\rho$. These indices sum to 1 but their interpretation as contributions to the variance is lost. 
\end{example}

For any $u \subset \{1,\dots,p\}$, the total Sobol' index $T_u$ is defined as the sum of all indices $S_v$ with $v \cap u \ne \emptyset$, i.e.
\begin{eqnarray}
T_u=\sum_{v \cap u \ne \emptyset} S_v. \label{covtotal}
\end{eqnarray}
If $x_1,x_2,\dots,x_p$ are independent then $S_u \le T_u$ so we may interpret $S_u$ as the exclusive contribution of $\x_u$ to $\V(f(\x))$ and $T_u$ as the contribution of $\x_u$ to $\V(f(\x))$ including its interactions with $\x_{\sim u}$. This does not generalize when $x_1,x_2,\dots,x_p$ possess dependences; it is possible to have $S_u > T_u$ so the interactions of $\x_u$ with $\x_{\sim u}$ give ``negative contributions," again a difficult concept to interpret. 

As highlighted in \cite{kucherenko}, an equivalent definition of the total Sobol' index, with independent or dependent variables, is given by
\begin{eqnarray*}
T_u = 1-\frac{\V(\E[f(\x) \vert \x_{\sim u}])}{\V(f(\x))} = \frac{\E[\V(f(\x) \vert \x_{\sim u})]}{\V(f(\x))}.
\end{eqnarray*}

The issue of GSA with dependent variables has been the object of intense recent research. Regarding Sobol' indices,  Xu and Gertner \cite{xugertner} propose a decomposition of the Sobol' indices into a correlated and uncorrelated part for linear models. Li et al. \cite{li} build upon this to decompose the Sobol' indices for a general model. Mara and Tarantola \cite{mara} propose to use the Gram-Schmidt process to decorrelate the inputs variables then define new indices through the Sobol' indices of the decorrelated problem. Building off the older work of \cite{stone} and \cite{hooker}, Chastaing et al. \cite{chastaing} provide a theoretical framework to generalize the ANOVA decomposition to problems with dependent variables. In a subsequent article \cite{chastaingalgorithm}, they also provide a computational algorithm to accompany their theoretical work. In contrast to the other works which focus on generalizing the ANOVA decomposition, Kucherenko et al. \cite{kucherenko} develop the Sobol' indices via the law of total variance. Recent work of Mara and Tarantola \cite{fast_dependent_variables} considered estimating Sobol' indices with dependent variables using the Fourier Amplitude Sensitivity Test.

Challenges with Sobol' indices, as highlighted in the works cited above, have motivated interest in other approaches. Staum et al. \cite{staum} suggest Shapley effects \cite{owen2014} as an alternative tool; \cite{iooss_prieur_shapley,owen_prieur_shapley} have since extended this work. Borgonovo \cite{borgonovo2,borgonovo1,miim_uq_handbook} develops a global sensitivity measure using conditional density functions instead of conditional expectations. Building upon this, Pianosi and Wagener \cite{cdfmethod} propose a measure of global sensitivity using the conditional distribution functions. All of these approaches suffer from interpretability and/or computability issues at some level.

Risk analysis, reliability engineering, variable prioritization for data acquisition, model development/analysis, and dimension reduction are a few of the possible applications of GSA. To facilitate our analysis, this article focuses on applying Sobol' indices for dimension reduction, i.e. approximating $f(\x)$ by a function which depends on fewer variables. To this end, we characterize the total Sobol' index, with independent or dependent variables, in terms of approximation error rather than variance analysis, by developing the approximation theoretic analogue of the law of total variance approach in \cite{kucherenko}. Section~\ref{definitionproperties} details this approximation theoretic characterization. We then use this characterization in Section~\ref{sec:dim_red} to analyze the error created through fixing unimportant variables, a common approach to dimension reduction. Practical and computational considerations are highlighted in Section~\ref{sec:practical_considerations}. Section~\ref{sec:examples} provides examples to illustrate the properties discussed in the previous sections. We conclude with forward looking remarks in Section~\ref{sec:conclusion}. 

\section{An Approximation Theoretic Perspective of Sobol' Indices}
\label{definitionproperties}

For $k=1,\dots,p$,   let $\Omega_k$ be the  set in which the input variable $x_k$ takes its values. Setting $\Omega=\Omega_1\times \Omega_2 \times \cdots \times \Omega_p$, we consider the probability space  $(\Omega,\F,\mu)$ as well as the Hilbert space $L^2(\Omega)$ of square integrable functions on this probability space. 

Let $f:\Omega \to \R$ be square integrable and $f_0=\E[f(\x)]$. Given $u \subset \{1,2,\dots,p\}$, we ask ``how accurately can $f(\x)-f_0$ can be approximated without the variables $\x_u$?" In other words, what is the error associated with the approximation
\begin{eqnarray}
f(\x) - f_0 \approx \p f(\x_{\sim u}), \label{approx}
\end{eqnarray}
where  $\p f(\x_{\sim u})$ is the optimal $L^2(\Omega)$ approximation of $f(\x)-f_0$ which does not depend on $\x_u$? We show that this error,
\begin{eqnarray}
\frac{\|(f(\x)-f_0)-\p f(\x_{\sim u}) \|_2^2}{\|f(\x)-f_0\|_2^2}, \label{generalizedindex}
\end{eqnarray}
coincides with the classical definition of the total Sobol' index \eqref{covtotal}. This requires a few technical considerations. 

For $f\in L^2(\Omega)$, we say that {\em $f$ does not depend on $x_k$} if and only if
\begin{eqnarray}
\mbox{there exists } N \in \F \mbox{ with } \mu(N) = 0 \mbox{ such that }
f(\x)=f(\y)\quad \forall \x,\y\in \Omega \setminus N \mbox{ with } \x_{\sim k}=\y_{\sim k}.
\label{f_dependend}
\end{eqnarray}
Otherwise, $f$ is said to depend on $x_k$. For any subset $v$ of $\{1,\dots, p\}$, we define 
\begin{eqnarray*}
M_v = \left\{ f \in L^2(\Omega) \vert f \mbox{ satisfies } \eqref{f_dependend} \mbox{ } \forall k \in \sim v \right\}
\end{eqnarray*}
as the set of all  functions in  $L^2(\Omega)$  that do not depend on any variables in $\x_{\sim v}$. Roughly speaking, $M_v$ is the set of those functions of $L^2(\Omega)$ that depend on $\x_v$. We prove in the Appendix (Proposition~\ref{prop:csubspace}) that $M_v$ is a closed subspace of $L^2(\Omega)$; consequently,
 $L^2(\Omega)$ can be decomposed as a direct sum of $M_v$ and $M_v^\perp$, the orthogonal complement of $M_v$, i.e.
\begin{eqnarray}
L^2(\Omega) =M_v \oplus M_v^\perp. \label{directsum}
\end{eqnarray}
It is worth noting here that $M_v^\perp \ne M_{\sim v}$. 

Setting $v = \sim u$, we can now rewrite  \eqref{approx} more explicitly as
\begin{eqnarray}
f(\x) = f_0 + \p f(\x_{\sim u})+\p^\perp f(\x), \label{ortho}
\end{eqnarray}
where  $\p f(\x_{\sim u})=\E[f(\x)-f_0|\x_{\sim u}]=\E[f(\x)|\x_{\sim u}]-f_0$ is the projection of $f(\x)-f_0$ onto $M_{\sim u}$. This orthogonal decomposition yields 
\begin{eqnarray}
\frac{\|(f(\x)-f_0)-\p f(\x_{\sim u}) \|_2^2}{\|f(\x)-f_0\|_2^2}=\frac{\| \p^\perp f(\x)\|^2}{\|f(\x)-f_0\|_2^2}=1-\frac{\| \p f(\x_{\sim u})\|_2^2}{\|f(\x)-f_0\|_2^2}. \label{generalizedindex2}
\end{eqnarray}

Proposition~\ref{prop:Tk} shows that the total Sobol' index $T_u$ equals \eqref{generalizedindex}, thus providing a new characterization of the total Sobol' indices, with independent or dependent variables, and giving a clear interpretation of  these indices in terms of relative approximation error. We note that the decomposition \eqref{ortho} is the approximation theoretic analogue of the law of total variance approach in \cite{kucherenko}.

\begin{proposition}
For $u \subset \{1,2,\dots,p\}$,
\begin{eqnarray*}
T_u = \frac{\|(f(\x)-f_0)-\p f(\x_{\sim u})\|_2^2}{\|f(\x)-f_0\|_2^2}.
\end{eqnarray*}
\label{prop:Tk}
\end{proposition}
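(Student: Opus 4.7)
The plan is to reduce the claim to the equivalent characterization of $T_u$ already quoted from Kucherenko et al. immediately above the proposition, namely
\begin{eqnarray*}
T_u = \frac{\E[\V(f(\x)\vert \x_{\sim u})]}{\V(f(\x))},
\end{eqnarray*}
and then to verify that the numerator and denominator on the right-hand side of the proposition coincide with this ratio. This makes the proof essentially a bookkeeping exercise once the identification of $\p f$ with a conditional expectation is in hand.

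First I would rewrite the denominator. Since $f_0 = \E[f(\x)]$, the definition of variance gives $\|f(\x) - f_0\|_2^2 = \V(f(\x))$, which matches the denominator in Kucherenko's formula.

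Next I would rewrite the numerator. The text states that $\p f(\x_{\sim u}) = \E[f(\x)\vert \x_{\sim u}] - f_0$, so
\begin{eqnarray*}
(f(\x) - f_0) - \p f(\x_{\sim u}) = f(\x) - \E[f(\x)\vert \x_{\sim u}].
\end{eqnarray*}
Taking $L^2$ norms and conditioning on $\x_{\sim u}$ (tower property), the square of the right-hand side equals $\E[\,\E[(f(\x)-\E[f(\x)\vert \x_{\sim u}])^2\,\vert\,\x_{\sim u}]\,] = \E[\V(f(\x)\vert \x_{\sim u})]$, by the very definition of conditional variance. Dividing by $\|f(\x)-f_0\|_2^2 = \V(f(\x))$ then produces exactly the Kucherenko formula for $T_u$, completing the identification.

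I do not anticipate a real obstacle here: the subspace framework of Section~\ref{definitionproperties} and the identification $\p f(\x_{\sim u}) = \E[f(\x)\vert \x_{\sim u}] - f_0$ (which itself uses that conditional expectation is the $L^2$ projection onto $M_{\sim u}$, a fact deferred to the Appendix via Proposition~\ref{prop:csubspace}) do the heavy lifting. If anything is worth spelling out, it is the fact that $\E[f(\x)\vert \x_{\sim u}] - f_0$ is genuinely the orthogonal projection onto $M_{\sim u}$ rather than merely a conditional expectation; this is where the closedness of $M_{\sim u}$ is invoked, and it is the one non-trivial ingredient borrowed from the Appendix.
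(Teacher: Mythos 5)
Your proof is correct, but it takes a genuinely different route from the paper's. You reduce the claim to the law-of-total-variance characterization $T_u = \E[\V(f(\x)\vert\x_{\sim u})]/\V(f(\x))$ quoted from Kucherenko et al., and then verify---via the tower property and the identity $\p f(\x_{\sim u}) = \E[f(\x)\vert\x_{\sim u}]-f_0$---that the approximation-error ratio equals that same quantity. This is a clean and short argument, but note that it outsources the substantive step: the paper's primary definition of $T_u$ is the covariance sum \eqref{covtotal}, and the equivalence with the conditional-variance formula is stated in the introduction with a citation but never proved in the paper. The paper's own proof works directly from \eqref{covtotal}: it uses the recursive construction \eqref{f_comp} to show $\p f(\x_{\sim u}) = \sum_{v\cap u=\emptyset} f_v(\x_v)$, hence $\p^\perp f(\x) = \sum_{v\cap u\ne\emptyset} f_v(\x_v)$, and then computes $\sum_{v\cap u\ne\emptyset}\C(f_v(\x_v),f(\x)) = \C(\p^\perp f(\x),\p f(\x_{\sim u})+\p^\perp f(\x)) = \|\p^\perp f(\x)\|_2^2$ using the orthogonality of the projection. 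That argument is self-contained and, as a byproduct, re-derives the Kucherenko identity; yours buys brevity and a transparent probabilistic reading (conditional variance) at the cost of leaning on the cited equivalence. Your closing remark is well taken: the identification of $\E[f(\x)\vert\x_{\sim u}]-f_0$ as the orthogonal projection onto $M_{\sim u}$ (resting on the closedness result in the Appendix) is the shared nontrivial ingredient in both arguments.
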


\begin{proof}
By rearranging  \eqref{anova}, we get 
\begin{eqnarray*}
f(\x)-f_0=\sum_{v \cap u=\emptyset} f_v(\x_v)+\sum_{v \cap u \ne \emptyset} f_v(\x_v).
\end{eqnarray*}
Using \eqref{f_comp} and the fact that $\p f(\x_{\sim u})$ is the $L^2(\Omega)$ projection of $f(\x)-f_0$ onto $M_{\sim u}$, we observe
\begin{align*}
\p f(\x_{\sim u}) & = \E[f(\x)-f_0|\x_{\sim u}] \\
& =f_{\sim u}(\x_{\sim u}) + \sum_{v \subset \sim u} f_v(\x_v) \\
& =\sum_{v \cap u=\emptyset} f_v(\x_v).
\end{align*}
Taking into account \eqref{covtotal}, \eqref{ortho}, \eqref{generalizedindex2}, and the linearity of the covariance operator, it follows that 
\begin{align*}
T_u = & \sum_{v \cap u \ne \emptyset} \frac{\C(f_v(\x_v),f(\x))}{\V(f(\x))}\\
=&\frac{1}{\|f(\x)-f_0\|_2^2} \C\left(\sum_{v \cap u \ne \emptyset} f_v(\x_v),\p f(\x_{\sim u})+\p^\perp f(\x) \right) \\
= & \frac{\C(\p^\perp f(\x),\p f(\x_{\sim u}))}{\|f(\x)-f_0\|_2^2} + \frac{\C(\p^\perp f(\x),\p^\perp f(\x))}{\|f(\x)-f_0\|_2^2} \\
=& \frac{0}{\|f(\x)-f_0\|_2^2} + \frac{ \|\p^\perp f(\x) \|_2^2}{\|f(\x)-f_0\|_2^2 }\\
=&\frac{\|(f(\x)-f_0)-\p f(\x_{\sim u})\|_2^2}{\|f(\x)-f_0\|_2^2}
\end{align*}
\end{proof}

Because of their approximation theoretic interpretation, this article focuses on the total Sobol' indices. For simplicity we will omit ``total" in what follows and refer to $T_u$ as the Sobol' index.

\section{Applying the Approximation Theoretic Perspective for Dimension Reduction}
\label{sec:dim_red}

One common use of the Sobol' indices is dimension reduction, i.e., approximating $f$ by a function which depends on fewer variables. There are several ways to do this, three examples are:
\begin{enumerate}
\item projecting $f$ onto a subspace of functions which only depend on a subset of the input variables,
\item constructing a surrogate model using only a subset of the variables, for instance, training a statistical model (such as a Gaussian process) with only a subset of the input variables
\item fixing some of the input variables to nominal values, or possibly a function of the other input variables.
\end{enumerate}
The approximation theoretic perspective in Section~\ref{definitionproperties} provides useful insights for all three of these possible approaches. For the first approach, the total Sobol' index $T_u$ is the relative $L^2(\Omega)$ error squared when $f$ is approximated by its orthogonal $L^2(\Omega)$ projection onto the subspace of functions which only dependent on $\x_{\sim u}$. However, acquiring this projection is computationally costly since it requires computing many high dimensional integrals, so this approach is limited in practice. The second approach is practical in cases where the user wishes to use existing evaluations of $f$ to train a surrogate model. The unimportant variables by the considered as latent and the surrogate model may be trained using only a subset of input variables. Since $T_u$ is the error for the optimal $L^2(\Omega)$ approximation, it provides a lower bound on the $L^2(\Omega)$ error of a surrogate model approximation. Hence $T_u$ is useful for making decisions about which variables to use when constructing a surrogate model in the second approach. The third approach, fixing inputs, is commonly used in practice because of its simplicity. As demonstrated below, the approximation theoretic perspective of Sobol' indices is useful for analyzing approximation error in this setting as well.

With the assumption of independent variables, classical results exist \cite{freezing_variables} which use the Sobol' indices to bound the error incurred when fixing variables to nominal values. Partition $\x=(\x_u,\x_{\sim u})$ and assume that $T_u$ is small. We would like to approximate $f$ by replacing $\x_u$ with a function of $\x_{\sim u}$. Specifically, we approximate $f(\x)$ by $f(g(\x_{\sim u}),\x_{\sim u})$ where $g(\x_{\sim u})$ is an approximation of $\x_u$. It is common to take the constant approximation $g(\x_{\sim u})=\E[\x_u]$ when the variables are independent. Our subsequent analysis considers a general $g$. 

The relative error incurred by replacing $\x_u$ with $g(\x_{\sim u})$ is
\begin{align}
\label{delta}
\delta_u = & \frac{\vert \vert f(\x) - f(g(\x_{\sim u}),\x_{\sim u}) \vert \vert_2^2}{\vert \vert f(\x)-f_0 \vert \vert_2^2} .
\end{align}

Proposition~\ref{prop:lower_bound} extends a result in \cite{freezing_variables} to the case with dependent variables.

\begin{proposition}
\label{prop:lower_bound}
For any $u \subset \{1,2,\dots,p\}$ and any $g:\Omega_{\sim u} \to \Omega_u$ such that $f(g(\x_{\sim u}),\x_{\sim u}) \in L^2(\Omega),$
\begin{eqnarray*} 
\delta_u \ge T_u.
\end{eqnarray*}
\end{proposition}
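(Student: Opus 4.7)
The plan is to recognize that $\delta_u$ is an $L^2$ approximation error within a specific subspace, while $T_u$ is the $L^2$ error of the optimal approximation in that same (larger) subspace, so the inequality follows from the best-approximation property of orthogonal projection.

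First I would observe that the approximant $f(g(\x_{\sim u}),\x_{\sim u})$ is, by construction, a function of $\x_{\sim u}$ alone. Under the stated assumption that it lies in $L^2(\Omega)$, this places it in $M_{\sim u}$, the closed subspace of $L^2(\Omega)$ consisting of functions that do not depend on the variables in $\x_u$.

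Next, I would recall the identification developed in Section~\ref{definitionproperties}: the orthogonal projection of $f(\x) - f_0$ onto $M_{\sim u}$ is $\p f(\x_{\sim u}) = \E[f(\x)\vert\x_{\sim u}] - f_0$, so the orthogonal projection of $f(\x)$ onto $M_{\sim u}$ is $f_0 + \p f(\x_{\sim u})$ (since constants lie in $M_{\sim u}$). The best-approximation property of orthogonal projections in a Hilbert space then gives, for every $h \in M_{\sim u}$,
\begin{eqnarray*}
\|f(\x) - h\|_2^2 \ \ge\ \|f(\x) - f_0 - \p f(\x_{\sim u})\|_2^2 \ =\ \|(f(\x)-f_0) - \p f(\x_{\sim u})\|_2^2.
\end{eqnarray*}

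Applying this with $h = f(g(\x_{\sim u}),\x_{\sim u}) \in M_{\sim u}$, dividing both sides by $\|f(\x)-f_0\|_2^2$, and invoking Proposition~\ref{prop:Tk} to identify the right-hand side with $T_u$ yields $\delta_u \ge T_u$.

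There is no real obstacle; the only subtle point is the measurability/integrability of $f(g(\x_{\sim u}),\x_{\sim u})$, which is exactly what the hypothesis $f(g(\x_{\sim u}),\x_{\sim u}) \in L^2(\Omega)$ is there to guarantee. Once this membership in $M_{\sim u}$ is in hand, the proof is a one-line consequence of Hilbert-space projection together with Proposition~\ref{prop:Tk}.
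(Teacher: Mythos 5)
Your proof is correct and follows essentially the same route as the paper's: both identify $f(g(\x_{\sim u}),\x_{\sim u})$ as an element of $M_{\sim u}$ and invoke the optimality of the orthogonal projection characterized in Proposition~\ref{prop:Tk}. Your version is slightly more careful in spelling out that constants lie in $M_{\sim u}$ (so the projection of $f$ itself is $f_0+\p f(\x_{\sim u})$), a detail the paper leaves implicit.
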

\begin{proof}
The result follows since $T_u$ is the squared relative $L^2(\Omega)$ error of the the orthogonal projection of $f(\x)-f_0$ onto $M_{\sim u}$, i.e. the optimal approximation in $M_{\sim u}$, and $f(g(\x_{\sim u}),\x_{\sim u}) \in M_{\sim u}$.
\end{proof}

An upper bound on $\delta_u$ is more useful than a lower bound in most cases; however, a tight upper bound is difficult to attain. Plugging \eqref{ortho} into \eqref{delta} yields
\begin{eqnarray}
\label{delta_proj}
\delta_u = \frac{\vert \vert \p^\perp f(\x) - \p^\perp f(g(\x_{\sim u}),\x_{\sim u}) \vert \vert_2^2}{\vert \vert f(\x)-f_0 \vert \vert_2^2} .
\end{eqnarray}

Recall that the Sobol' index $T_u$, which we assume to be small, is given by
\begin{eqnarray*}
T_u = \frac{ \vert \vert \p^\perp f(\x) \vert \vert_2^2}{ \vert \vert f(\x)-f_0 \vert \vert_2^2} .
\end{eqnarray*}
Hence, $\p^\perp f(\x)$ is small relative to $f(\x)-f_0$.

Proposition~\ref{prop:upper_bound} provides a loose, but informative, upper bound on $\delta_u$.

\begin{proposition}
For any $u \subset \{1,2,\dots,p\}$ and any $g:\Omega_{\sim u} \to \Omega_u$ such that $f(g(\x_{\sim u}),\x_{\sim u}) \in L^2(\Omega),$
\label{prop:upper_bound}
\begin{eqnarray*}
\delta_u \le T_u + \frac{\vert \vert \p^\perp f(g(\x_{\sim u}),\x_{\sim u}) \vert \vert_2^2}{\vert \vert f(\x)-f_0 \vert \vert_2^2} + 2 T_u \frac{\vert \vert \p^\perp f(g(\x_{\sim u}),\x_{\sim u}) \vert \vert_2}{\vert \vert \p^\perp f(\x) \vert \vert_2}
\end{eqnarray*}
\end{proposition}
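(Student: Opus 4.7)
The plan is to start from equation \eqref{delta_proj}, which expresses $\delta_u$ purely in terms of $\p^\perp f(\x)$ and $\p^\perp f(g(\x_{\sim u}),\x_{\sim u})$, and then expand the squared norm of a difference and apply the Cauchy--Schwarz inequality. Concretely, abbreviate $A = \p^\perp f(\x)$ and $B = \p^\perp f(g(\x_{\sim u}),\x_{\sim u})$ and $N = \|f(\x)-f_0\|_2^2$, so that
\begin{eqnarray*}
\delta_u = \frac{\|A-B\|_2^2}{N} = \frac{\|A\|_2^2 - 2\langle A,B\rangle + \|B\|_2^2}{N}.
\end{eqnarray*}

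Next I would bound the cross term using Cauchy--Schwarz: $-2\langle A,B\rangle \le 2\|A\|_2\|B\|_2$, giving
\begin{eqnarray*}
\delta_u \le \frac{\|A\|_2^2}{N} + \frac{\|B\|_2^2}{N} + \frac{2\|A\|_2\|B\|_2}{N}.
\end{eqnarray*}
The first of these three terms is exactly $T_u$ by Proposition~\ref{prop:Tk} (or by its definition via \eqref{generalizedindex2}), and the second is precisely the middle term in the claimed bound. For the third term, I would simply multiply and divide by $\|A\|_2$ to rewrite
\begin{eqnarray*}
\frac{2\|A\|_2\|B\|_2}{N} = 2\,\frac{\|A\|_2^2}{N}\cdot\frac{\|B\|_2}{\|A\|_2} = 2\,T_u\,\frac{\|B\|_2}{\|A\|_2},
\end{eqnarray*}
which matches the last term in the proposition.

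There is essentially no obstacle: the proof reduces to the identity $\|A-B\|_2^2 \le (\|A\|_2+\|B\|_2)^2$ (the triangle inequality in $L^2$) followed by an algebraic regrouping that exposes $T_u$. The only minor care needed is to note that, should $\|A\|_2 = 0$ (i.e.\ $T_u = 0$), the last term should be read as the limit $0$, since in that case $A = 0$ a.e.\ and $\delta_u = \|B\|_2^2/N$ already matches the bound. Thus the statement follows directly from \eqref{delta_proj}, Cauchy--Schwarz, and the definition of $T_u$.
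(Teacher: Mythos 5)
Your proof is correct and follows essentially the same route as the paper's: expand $\|\p^\perp f(\x)-\p^\perp f(g(\x_{\sim u}),\x_{\sim u})\|_2^2$, bound the cross term via Cauchy--Schwarz, and multiply and divide by $\|\p^\perp f(\x)\|_2$ to expose $T_u$. Your remark about the degenerate case $\|\p^\perp f(\x)\|_2=0$ is a small extra care the paper omits.
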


\begin{proof}
Notice,
\begin{align*}
\vert \vert \p^\perp f(\x) - \p^\perp f(g(\x_{\sim u}),\x_{\sim u}) \vert \vert_2^2  = & \vert \vert \p^\perp f(\x) \vert \vert_2^2 \\ & + \vert \vert \p^\perp f(g(\x_{\sim u}),\x_{\sim u}) \vert \vert_2^2 \\
& - 2 \E [ \p^\perp f(\x) \p^\perp f(g(\x_{\sim u}),\x_{\sim u}) ] .
\end{align*}
Applying the Triangle inequality and Cauchy-Schwarz inequality we have
\begin{align*}
\vert \vert \p^\perp f(\x) - \p^\perp f(g(\x_{\sim u}),\x_{\sim u}) \vert \vert_2^2  \le & \vert \vert \p^\perp f(\x) \vert \vert_2^2 \\ & + \vert \vert \p^\perp f(g(\x_{\sim u}),\x_{\sim u}) \vert \vert_2^2 \\
& + 2  \vert \vert \p^\perp f(\x) \vert \vert_2 \vert \vert \p^\perp f(g(\x_{\sim u}),\x_{\sim u}) \vert \vert_2 .
\end{align*}
Multiplying and dividing $2  \vert \vert \p^\perp f(\x) \vert \vert_2 \vert \vert \p^\perp f(g(\x_{\sim u}),\x_{\sim u}) \vert \vert_2$ by $\vert \vert \p^\perp f(\x) \vert \vert_2$ and dividing both sides of the inequality by $\vert \vert f(\x)-f_0\vert \vert_2^2$ completes the proof.
\end{proof}

Observe that if 
\begin{eqnarray*}
\vert \vert \p^\perp f(\x) \vert \vert_2 = \vert \vert  \p^\perp f(g(\x_{\sim u}),\x_{\sim u}) \vert \vert_2
\end{eqnarray*}
 then 
\begin{eqnarray*}
T_u \le \delta_u \le 4 T_u .
\end{eqnarray*}
This assumption typically does not hold in practice, nor is it verifiable; however, it provides some intuition about the behavior of the error. In particular, $\delta_u$ will be small when $\vert \vert  \p^\perp f(g(\x_{\sim u}),\x_{\sim u}) \vert \vert_2$ is approximately $\vert \vert \p^\perp f(\x) \vert \vert_2$. The error, $\delta_u$, will be large when the magnitude of $\p^\perp f(\x)$ increases dramatically on subsets of $\Omega$ which have a small probability under $\x$ and a larger probability under $(g(\x_{\sim u}),\x_{\sim u})$. The magnitude of $\delta_u$ is closely linked to how well the distribution of $(g(\x_{\sim u}),\x_{\sim u})$ approximates the distribution of $\x$, and the robustness of the Sobol' index which respect to changes in the distribution of $\x$, i.e. how much $T_u$ changes when the distribution of $\x$ is changed. An algorithm for testing such robustness is given in \cite{hart_robustness}.

Three conclusions may be drawn from the arguments above:
\begin{enumerate}
\item Dependencies between the variables can help reduce $\delta_u$.
\item A tight upper bound will be difficult attain without placing additional assumptions on the behavior of $f$ on sets of small probability.
\item Testing the robustness of $T_u$ with respect to changes in the distribution of $\x$ provides a heuristic to asses when $\delta_u$ will be small.
\end{enumerate}
 
\section{Practical and Computational Considerations}
\label{sec:practical_considerations}

The Sobol' indices may be estimated via Monte Carlo integration \cite{kucherenko} or the Fourier Amplitude Sensitivity Test \cite{fast_dependent_variables}. In what follows, $\{T_k\}_{k=1}^p$ is estimated via Monte Carlo integration using $(p+1)N$ evaluations of $f$, where $N$ is the number of Monte Carlo samples \cite{kucherenko}.

When the variables are independent we have,
\begin{eqnarray}
\min_{k \in u} T_k \le T_u \le \sum_{k \in u} T_k ,
\label{T_u_inference}
\end{eqnarray}
for any $u \subset \{1,2,\dots,p\}$. In this case it is typically sufficient to compute $\{T_k\}_{k=1}^p$ as inferences about $T_u$ may be made using $\{T_k\}_{k=1}^p$ with \eqref{T_u_inference}. For instance, if $ \sum_{k \in u} T_k$ is small then we know that $T_u$ is small. This does not generalize when the variables are dependent. The example in Subsection~\ref{corr_knob} provides a case where $T_1$ and $T_2$ are small, but $T_{1,2}$ is large. The approximation theoretic framework is helpful for interpreting this. When $f$ is sensitive to two variables which are dependent on one another then one variable may be projected out with little error because the remaining variable can approximate its influence on $f$; however, when both are projected out a large error is incurred.

A practical strategy with dependent variables is to estimate $\{T_k\}_{k=1}^p$, which requires $(p+1)N$ evaluations of $f$. Then $\{T_k\}_{k=1}^p$ may be analyzed, along with information about the dependencies in $\x$ (known analytically or from the samples), and the user may select particular subsets $u \subset \{1,2,\dots,p\}$ for which to compute $T_u$. Using the estimator from \cite{kucherenko}, the additional cost to compute $T_u$ for a given subset $u$ will be $N$ evaluations of $f$.

The robustness of $T_k$ to changes in the distribution of $\x$ may be computed as a by-product of computing $\{T_k\}_{k=1}^p$ \cite{hart_robustness}. If $T_k$, $k \in u$, is not robust to changes in the distribution of $\x$, then $\delta_u$ may be significantly larger than $T_u$. For a particular $g$ and subset $u$, the user may compute $\delta_u$ directly; this also requires $N$ evaluations of $f$.

Choosing $g$ is a challenge in practice. The natural choice, $g(\x_{\sim u}) = \E[\x_u]$, fails to exploit dependency information and is not suggested. Rather, we suggest $g(\x_{\sim u})=\E[\x_u \vert \x_{\sim u}]$ since (i) linear dependencies are common in practice (normal distributions and copula models are two common examples), and (ii)$ \E[\x_u \vert \x_{\sim u}]$ is easily computed (either analytically or through linear regression with the existing samples). If the dependencies in $\x$ are known to be nonlinear then $g$ may be estimated by nonlinear regression (using the existing samples). The challenge in this case is determining an appropriate nonlinear model for $g$.

\section{Illustrative Examples}
\label{sec:examples}
This section provides two illustrative examples to highlight properties of the Sobol' indices and their association with approximation error. 
\subsection{A Linear Function}
\label{corr_knob}
Let 
\begin{eqnarray}
f(\x)=20x_1+16x_2+12x_3+10x_4 + 4x_5 \label{simpleex}
\end{eqnarray}
and $\x$ follow a multivariate normal distribution with mean $\mu$ and covariance matrix $\Sigma$ given by
\[ \mathbf{\mu}= \left[ \begin{array}{cc}
0 \\
0 \\
0 \\
0 \\
0\\
\end{array} \right],\hspace{10 mm}
\Sigma= \left[ \begin{array}{ccccc}
1 &.5\rho & .5 \rho & 0 & .8 \rho \\
.5\rho & 1 & 0 & 0 & 0 \\
.5 \rho & 0 & 1 & 0 & .3 \rho \\
0 & 0 & 0 & 1 & 0\\
.8 \rho & 0 & .3\rho & 0 & 1 \\
\end{array} \right], \qquad 0\le \rho\le 1. \] 

The Sobol' indices $T_k$, $k=1,\dots,5$, are computed analytically and displayed in Figure~\ref{fig:corrknob} as a function of $\rho$. Observe that the ordering of importance changes as  the correlations become stronger. This underscores the significance of accounting for dependencies. Also notice that the Sobol' indices are decreasing as a function of $\rho$. The approximation theoretic perspective provides a nice interpretation of this. As the correlations are strengthened, the error associated with projecting out a variable decreases because its influence on $f(\x)$ may be approximated by the other variables.

\begin{figure}[h]
\centering
\includegraphics[width=.55 \textwidth]{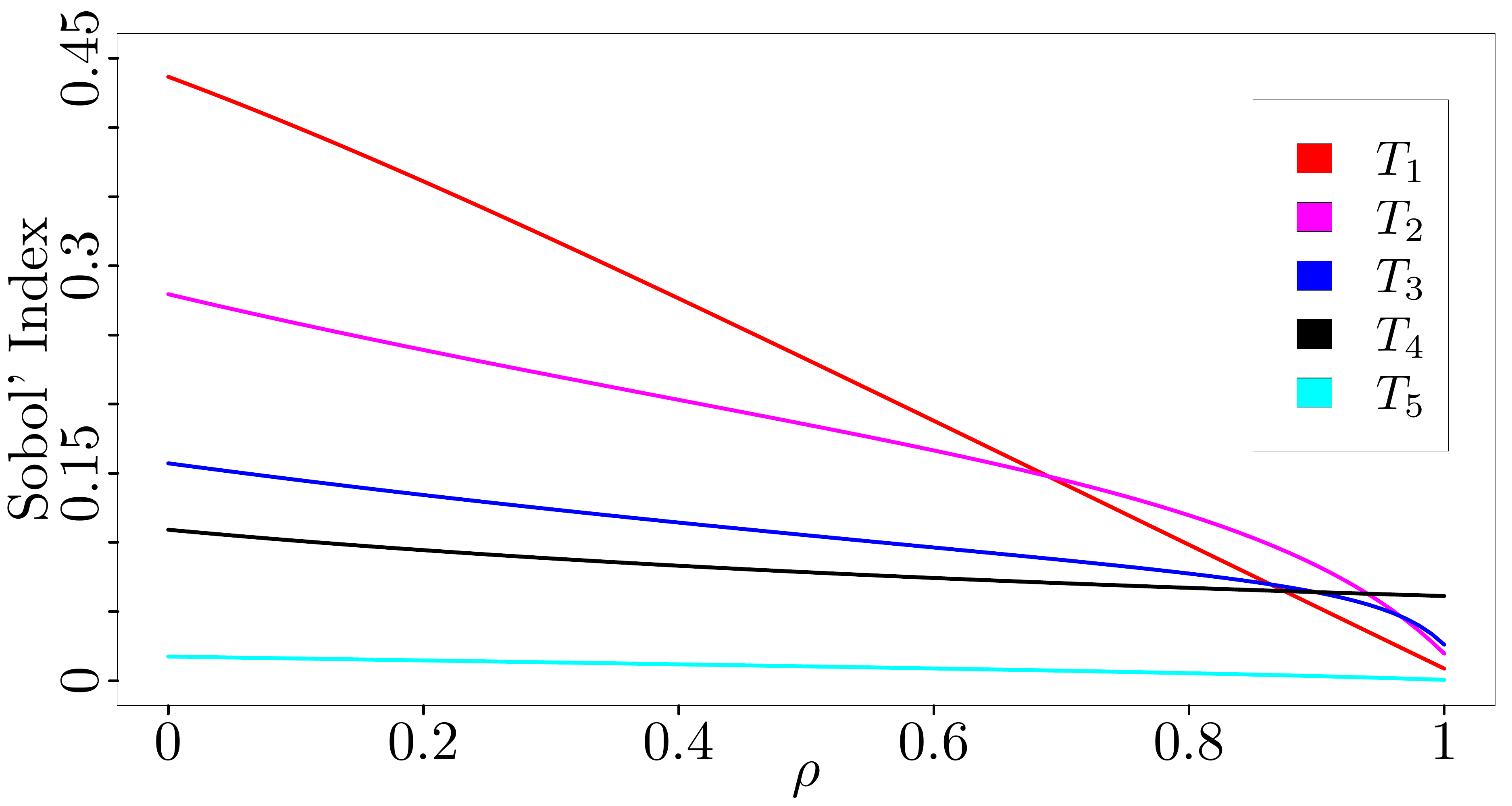}
\caption{Sobol' indices for \eqref{simpleex} with increasing correlation strength as $\rho$ varies from 0 to 1.}
\label{fig:corrknob}
\end{figure}

Table~\ref{tab:totalindices} displays the Sobol' indices $T_1,T_2$, and $T_{1,2}$ when $\rho=1$. This demonstrates that two variables may have small Sobol' indices individually ($T_1$ and $T_2$), but their joint Sobol' index ($T_{1,2}$) may be significantly larger. This phenomenon, which does not occur when the variables are independent, is important when analyzing results with dependent variables; inference about subsets cannot be made with $\{T_k\}_{k=1}^p$ alone.

\begin{table}[h]
\centering
\ra{1.5}
\begin{tabular}{ccc}
\toprule
$T_1$ & $T_2$ & $T_{1,2}$\\
0.0087 & 0.0196 & 0.4228\\
\bottomrule
\end{tabular}
\vskip.5cm
\caption{Sobol' indices of \eqref{simpleex} for variables $x_1$, $x_2$, and $(x_1,x_2)$ when $\rho = 1$.}
\label{tab:totalindices}
\end{table}

\subsection{A Nonlinear Function}
\label{gfunsection}
Let $f$ be the g-function of \cite{sobol} with $p=10$ variables; more precisely, $f$ is given by
\begin{eqnarray}
f(\x)=\prod_{k=1}^{10} \frac{\vert 4x_k-2 \vert +a_k}{1+a_k},
\label{g_fun}
\end{eqnarray}
where the parameters $a_k$, $k=1,2,\dots,10$, is given by $\mathbf a = (1,2,3,9,11,13,20,25,30,35)$. Let $\x$ follow a multivariate normal distribution with mean $\mu \in \R^{10}$, 
$$\mu_k = \frac{1}{2}, \qquad k=1,2,\dots,10,$$
and covariance matrix $\Sigma \in \R^{10 \times 10}$,
$$\Sigma_{k,k}=\frac{1}{6}, \qquad k=1,2,\dots,10, \qquad \text{ and } \qquad \Sigma_{i,j} = \frac{\rho}{6 \vert i-j+1 \vert^\frac{1}{\gamma}}, \qquad i\ne j .$$

The covariance matrix is parameterized so that the magnitude of the covariances are large near the diagonal of $\Sigma$ and decrease as they move away from the diagonal. The parameter $\gamma$ determines the rate at which they decrease, as $\gamma \to \infty$, the off diagonal elements of $\Sigma$ all converge to $\rho/6$. Hence $\gamma$ tunes how many variables are strongly correlated with one another. The parameter $\rho$ scales the strength of the correlations.

Direct calculations yield that variables $x_i$, $i=7,8,9,10$, are not influential for any $\rho$, $\gamma$, though $T_{7,8,9,10}$ does depend on $\rho$ and $\gamma$. Figure~\ref{fig:gfun_gaussian} demonstrates how the Sobol' index $T_{7,8,9,10}$ and the approximation error $\delta_{7,8,9,10}$ vary with respect to $\rho$ and $\gamma$. On the left panel we fix $\gamma = 1$ and vary $\rho$ from 0 to 1; on the right panel we fix $\gamma = 6$ and vary $\rho$ from 0 to 1.

\begin{figure}[h]
\centering
\includegraphics[width=.49 \textwidth]{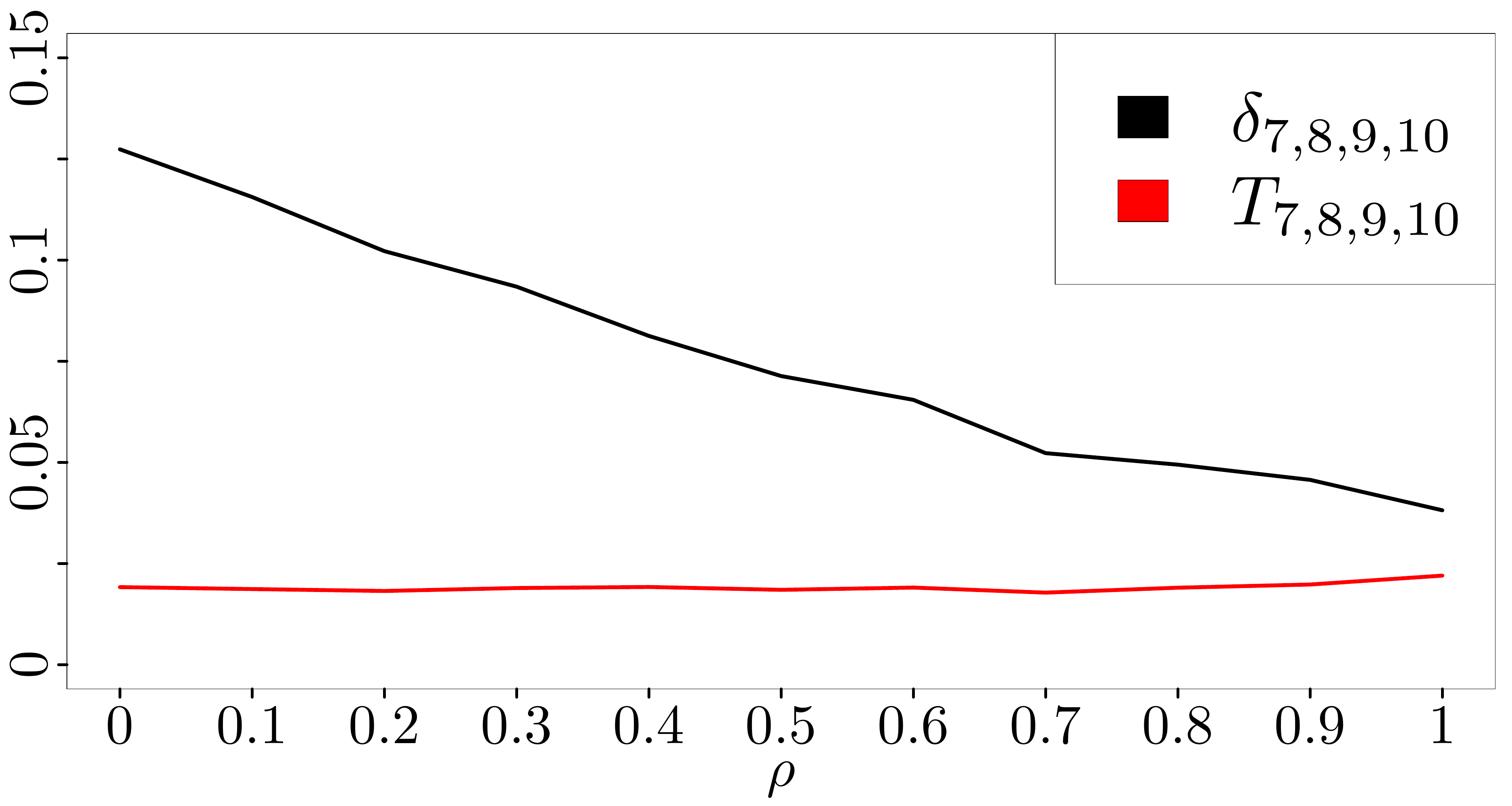}
\includegraphics[width=.49 \textwidth]{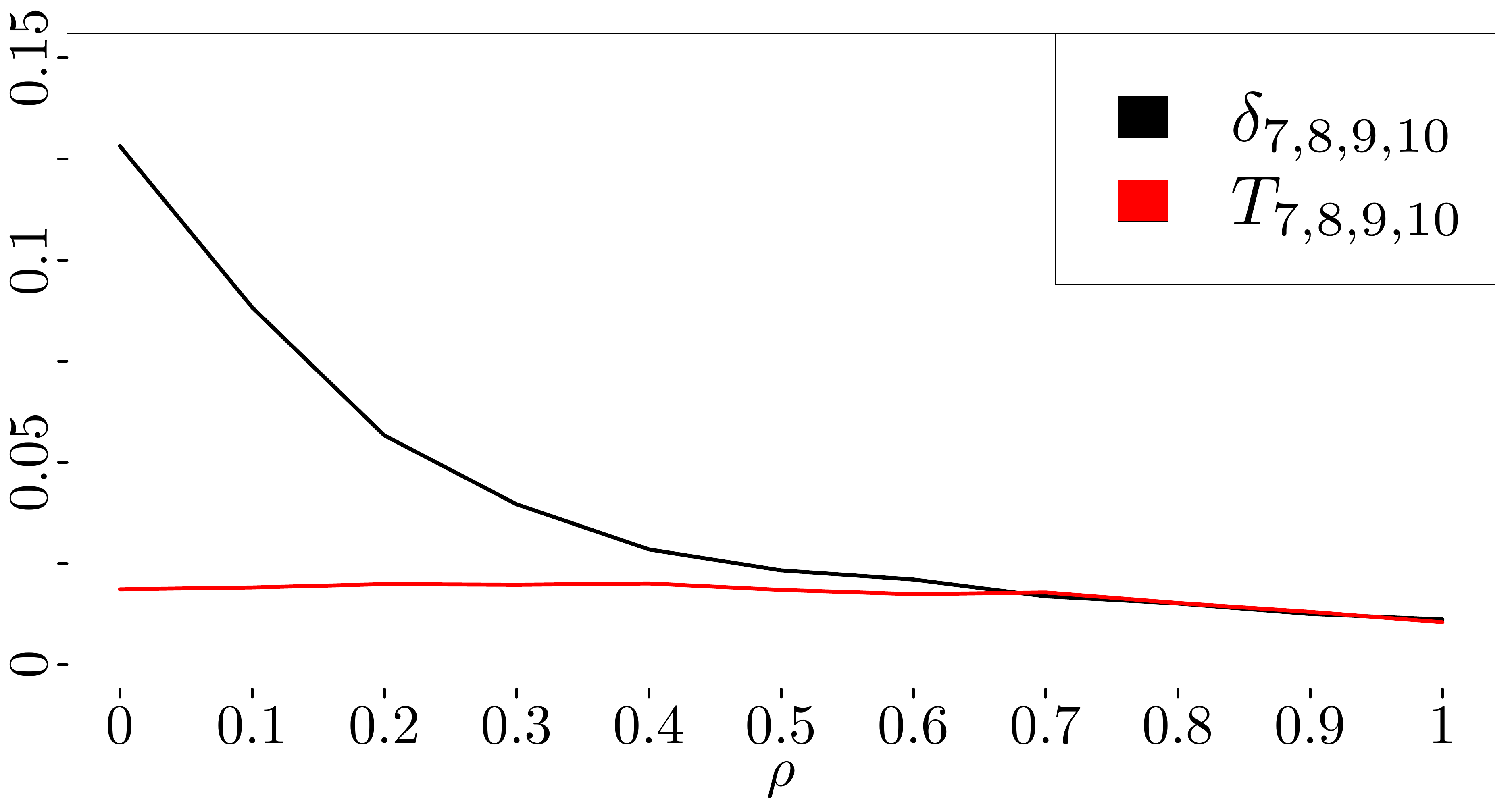}
\caption{Sobol' index $T_{7,8,9,10}$ and approximation error $\delta_{7,8,9,10}$ for \eqref{g_fun} as $\rho$ varies from 0 to 1. Left: $\gamma=1$; right: $\gamma=6$.}
\label{fig:gfun_gaussian}
\end{figure}

Figure~\ref{fig:gfun_gaussian} shows that linear dependencies, as in the case of a multivariable normal random vector, aid in approximating $f$ by fixing unimportant variables. In particular, we observe that the error from replacing $\x_{7,8,9,10}$ with its conditional expectation decreases as $\rho$ increases. Taking a larger $\gamma$, as in the right panel, corresponds to having more variables which are strongly correlated. Having $T_{7,8,9,10} \approx \delta_{7,8,9,10}$, as in the right panel with $\rho \ge 0.7$, demonstrates that the optimal approximation may be attained by replacing $\x_{7,8,9,10}$ with its conditional expectation.

\section{Conclusion}
\label{sec:conclusion}
This article provides a framework to analyze dimension reduction with dependent variables and highlights how dependencies may aid the user in dimension reduction. The approximation theoretic characterization of Sobol' indices is useful as it demonstrates how Sobol' indices are linked to optimal approximation and how they may be used to analyze the error when replacing variables $\x_u$ with a function $g(\x_{\sim u})$, a common approach in practice. An important factor in this analysis is the robustness of the Sobol' indices to changes in the distribution of $\x$ \cite{hart_robustness}. Further analysis is needed to (i) connect robustness studies to dimension reduction and (ii) determine an optimal $g$, particularly in the presence of nonlinear dependencies.

There has also been recent progress with derivative-based global sensitivity indices \cite{dgsm_poincare,dgsm1,dgsm2} and active subspaces \cite{active_subspaces} as alternative approaches for dimension reduction with independent variables. Extending analysis of these methods for dimension reduction with dependent variables is another avenue of future research. 

The approximation theoretic perspective provides a useful characterization of Sobol' indices for dimension reduction with dependent variables. However, this characterization does not address difficulties which arise in other applications of GSA. For instance, when GSA is used for model development the user wants to identify the most important variables. If the most important variables have strong dependencies then their Sobol' indices may be small which hides the information the user desires. Future work may consider alternative characterizations which are focused toward other applications. This may involve the Sobol' indices, or possibly other tools in GSA. 
%
%
%
%
\appendix
\section*{Appendix}

\begin{proposition}
$M_u$ is a closed subspace of $L^2(\Omega)$.
\label{prop:csubspace}
\end{proposition}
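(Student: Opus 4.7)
The plan is to check the two properties in turn, with the bulk of the work devoted to closedness. The subspace property follows directly from the pointwise definition~\eqref{f_dependend}: if $f, g \in M_u$ and $k \in \sim u$, each function is constant along the $x_k$ fibre off a null set, so the union of these null sets, still null, is a set off which any linear combination of $f$ and $g$ is also $x_k$-independent. Hence $M_u$ is closed under addition and scalar multiplication, and clearly contains $0$.

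For closedness, let $(f_n) \subset M_u$ with $f_n \to f$ in $L^2(\Omega)$. The strategy is to exhibit a representative of $f$ (in its $L^2$ equivalence class) that satisfies~\eqref{f_dependend} for every $k \in \sim u$. Since $L^2$-convergence yields a $\mu$-a.e.\ convergent subsequence, extract $f_{n_j}$ converging outside a null set $E$, and define $\tilde f(\x)$ as the pointwise limit of the subsequence where the limit exists and $0$ elsewhere. Then $\tilde f$ is $\mathscr{F}$-measurable and $\tilde f = f$ in $L^2(\Omega)$, so it suffices to show $\tilde f \in M_u$.

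For each $k \in \sim u$, let $N_j^k$ be the null set guaranteed by~\eqref{f_dependend} for $f_{n_j}$, and set $N_k = E \cup \bigcup_{j} N_j^k$, which remains null as a countable union of null sets. For any $\x, \y \in \Omega \setminus N_k$ with $\x_{\sim k} = \y_{\sim k}$, each $f_{n_j}$ satisfies $f_{n_j}(\x) = f_{n_j}(\y)$ and both sequences converge, so $\tilde f(\x) = \tilde f(\y)$ upon passing to the limit. Thus $\tilde f$ obeys~\eqref{f_dependend} for every $k \in \sim u$, giving $\tilde f \in M_u$ and hence $f \in M_u$. The main technical point is reconciling the pointwise condition~\eqref{f_dependend} with $L^2$ equivalence: it is the extraction of an a.e.\ convergent subsequence, together with the fact that countable (but not uncountable) unions of null sets remain null, that allows a valid pointwise representative to be assembled without losing control of the $x_k$-invariance condition.
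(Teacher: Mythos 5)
Your proof is correct and rests on the same key facts as the paper's own argument: $L^2(\Omega)$ convergence yields an a.e.\ convergent subsequence, and a countable union of null sets is null. The only difference is organizational---the paper argues by contradiction (assuming $f$ depends on some $x_i$ with $i \notin u$) whereas you argue directly by assembling a pointwise representative and collecting the exceptional null sets---and your version is if anything the more careful of the two, since it also verifies the subspace property, which the paper passes over.
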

\begin{proof}
$M_u$ is clearly a subset of $L^2(\Omega)$. To show that it is closed, 
let $\{f_n\}$ be a sequence in $M_u$ which converges to $f \in L^2(\Omega)$. We want to show that $f \in M_u$. Suppose by contradiction that $f \notin M_u$. Then $\exists i \in \{1,2,\dots,p\}$ such that $i \notin u$ and $f$ depends on $x_i$. Then $\exists A \in \F$ and $\x,\y \in A$ such that $\mu(A)>0$ with $\x_{\sim i}=\y_{\sim i}$ and $f(\x)\ne f(\y)$. Since $f_n \to f$ in $L^2(\Omega)$ then $\exists \{f_{n_k}\}$, a subsequence of $\{f_n\}$, such that $f_{n_k} \to f$ point wise almost everywhere. Since $\x, \y \in A$ and $\mu(A)>0$ then $f_{n_k}(\x) \to f(\x)$ and $f_{n_k}(\y) \to f(\y)$. But $f_{n_k}$ do not depend on $x_i$ so $f_{n_k}(\x)=f_{n_k}(\y)$ $\forall k \in \N \implies f(\x)=f(\y)$. This is a contradiction so $f \in M_u$ and hence $M_u$ is closed.
\end{proof}

\bibliographystyle{siam}

\bibliography{Dep_Var}
\end{document}